\documentclass{svmult}
\usepackage{amsmath,amsfonts}
\usepackage{hyperref}
\usepackage{graphicx}
\usepackage[utf8]{inputenc}
\usepackage{slashed}
\usepackage{dsfont}
\usepackage{bm}
\usepackage{mathrsfs}
\usepackage{tensor}
\usepackage{geometry}
\usepackage{xcolor}
\usepackage{mathptmx}
\usepackage{helvet}
\usepackage{courier}
\usepackage{makeidx}
\usepackage{multicol}
\usepackage{footmisc}
\usepackage{appendix}
\usepackage{amssymb}
\usepackage{epic}
\usepackage{eepic}
\usepackage[matrix,arrow]{xy}
\usepackage{epsfig}
\usepackage{url}
\usepackage{hyperref}
\usepackage{multirow}
\usepackage{graphicx}				
\usepackage{amssymb}
\usepackage{amssymb}
\usepackage{color}
\usepackage{amsmath}
\usepackage{float}
\usepackage{tgtermes}
\usepackage{mathrsfs}
\usepackage{mathrsfs}
\usepackage{mathtools}
\usepackage{fullpage}
\usepackage{bm}
\newtheorem{defn}{Definition}
\newtheorem{naidefn}{Na\"{i}ve Definition}
\newtheorem{proppy}{Proposition}
\newtheorem{lemmy}{Lemma}
\newtheorem{cory}{Corollary}
\newtheorem{rmkk}{Remark}
\newtheorem{thm}{Theorem}

\usepackage{amscd}
\linespread{1.2}
\makeatletter
\newcommand{\dotr}[1]{%
  \mathpalette\@dotr{#1}%
}
\newcommand*{\@dotr}[2]{%
  \sbox0{$\m@th#1#2$}%
  \usebox{0}%
  \raisebox{\dimexpr\ht0-\height}{$\m@th#1\@smallbullet#1\bullet$}%
  \kern\scriptspace
}
\newcommand*{\@smallbullet}[2]{%
  \scalebox{.5}{$\m@th#1#2$}%
}
\makeatother

\newcommand{\beq}{\begin{equation}}
\newcommand{\eeq}{\end{equation}}
\newcommand{\bea}{\begin{eqnarray}}
\newcommand{\eea}{\end{eqnarray}}






\setcounter{tocdepth}{2}
\setcounter{secnumdepth}{2}

\begin{document}

\title*{The Derived Category of Coherent Sheaves and B-model Topological String Theory}
\author{Stephen Pietromonaco}
\maketitle
\abstract{This elementary survey article was prepared for a talk at the 2016 Superschool on Derived Categories and D-branes.  The goal is to outline an identification of the bounded derived category of coherent sheaves on a Calabi-Yau threefold with the D-brane category in B-model topological string theory.  This was originally conjectured by Kontsevich \cite{kontsevich_homological_1994}.  We begin by briefly introducing topological closed string theory to acquaint the reader with the basics of the non-linear sigma model.  With the inclusion of open strings, we must specify boundary conditions for the endpoints; these are what we call D-branes.  After briefly summarizing the necessary homological algebra and sheaf cohomology, we argue that one should think of a D-brane as a complex of coherent sheaves, and provide a physical motivation to identify complexes up to homotopy.  Finally, we argue that renormalization group (RG) flow on the worldsheet provides a physical realization of quasi-isomorphism.  This identifies a stable object in the derived category with a universality class of D-branes in physics.  I aim for this article to be an approachable introduction to the subject for both mathematicians and physicists.  As such, it is far from a complete account.  The material is based largely on lecture notes of E. Sharpe \cite{sharpe_lectures_2003} as well as the paper \cite{aspinwall_d-branes_2004} of P. Aspinwall.}

\bibliographystyle{unsrt}

\section{Topological Closed String Theories}

The starting point for closed string topological string theories is the non-linear sigma model which studies maps $\phi: \Sigma \to X$, where $\Sigma$ is a compact, oriented Riemann surface called the `worldsheet' and we take $X$ to be a Calabi-Yau threefold, called the `target space.'  If only closed strings are present, $\Sigma$ is taken to be without boundary.  We can take local complex coordinates $(z, \bar{z})$ on $\Sigma$, and $w^{i} = \phi^{i}(z, \bar{z})$ on $X$.  We have a K{\"a}hler metric $g_{i \bar{j}}$, as well as an anti-symmetric B-field $B_{i \bar{j}}$ on $X$.  Of course, the indices here correspond to tensor components in the complex coordinates $w^{i}$.

The theory becomes \emph{topological} after performing one of two possible twists.  In what sense is the theory topological?  Such a non-linear sigma model is a two-dimensional quantum field theory defined on the fixed Riemann surface $\Sigma$.  Therefore, to say the twisted theory is topological is to say there exists a subsector of operators such that the correlation functions are independent of the metric \emph{on the worldsheet}.  It is crucial to not confuse the metric on the worldsheet with the metric on the target Calabi-Yau.  I will review the two topologically twisted models which Witten \cite{witten_mirror_1991} called the A and B models.  The A-model will depend only on the K\"{a}hler structure on $X$ while the B-model will depend only on the complex structure.  So there will indeed be partial dependence on the target space metric, the exact form of which will depend on the model under consideration.  In addition, I will define a BRST operator $Q$ (this operator will be different in the A and B models).  The physical observables of the topological subsector will consist of products of local operators, each of which is invariant under the BRST operator $Q$.  By convention, we denote the target space by $Y$ in the A-model and as $X$ in the B-model.

Let $T_{X}$ be the complexified tangent bundle of $X$, which can be decomposed as $T_{X} = T^{(1,0)}_{X} \oplus T^{(0,1)}_{X}$.  The fermions in the theory require a choice of square-root bundles $K^{1/2}$ and $\overline{K}^{1/2}$, where $K$ and $\overline{K}$ are the canonical and anti-canonical bundles on $\Sigma$, respectively.  The non-linear sigma model action is given by: (equation (2.4) in \cite{witten_mirror_1991})

\begin{equation}
S= \int_{\Sigma} d^{2}z  \bigg( \frac{1}{2} g_{i j} \partial_{z}\phi^{i} \partial_{\bar{z}} \phi^{j}+\frac{i}{2}B_{i j} \partial_{z}\phi^{i} \partial_{\bar{z}}\phi^{j} + i \psi_{-}^{\bar{i}}D_{z}\psi_{-}^{i} g_{i \bar{i}} + i \psi_{+}^{\bar{i}}D_{\bar{z}}\psi_{+}^{i} g_{\bar{i} i} + R_{i \bar{i} j \bar{j}}\psi_{+}^{i} \psi_{+}^{\bar{i}} \psi_{-}^{j} \psi_{-}^{\bar{j}}\bigg),
\end{equation}
where $R_{i \bar{i} j \bar{j}}$ is the Riemann tensor on $X$, $D_{z}$ is the $\partial$ operator on $\overline{K}^{1/2} \otimes \phi^{*}T^{(1,0)}_{X}$, arising by pulling back the holomorphic part of the Levi-Civita connection on $T_{X}$.  Likewise, $D_{\bar{z}}$ is the $\overline{\partial}$ operator on $K^{1/2} \otimes \phi^{*}T^{(1,0)}_{X}$.  The fermion fields are sections of the following bundles,

\begin{equation}
\begin{split}
& \psi_{+}^{i} \in \Gamma\big( K^{1/2} \otimes \phi^{*}T_{X}^{(1,0)}\big), \,\,\,\,\,\,\,\,\,\,  \psi_{+}^{\bar{i}} \in \Gamma\big( K^{1/2} \otimes \phi^{*}T_{X}^{(0,1)}\big), \\
&  \psi_{-}^{i} \in \Gamma\big( \overline{K}^{1/2} \otimes \phi^{*}T_{X}^{(1,0)}\big), \,\,\,\,\,\,\,\,\,\,  \psi_{-}^{\bar{i}} \in \Gamma\big( \overline{K}^{1/2} \otimes \phi^{*}T_{X}^{(0,1)}\big).
\end{split}
\end{equation}

The sigma model action above is really a worldsheet action; the integral is over two-forms on $\Sigma$.  Therefore, all of the structures described above need to be pulled back to $\Sigma$ via $\phi$, which implies that the pullback of the metric, the B-field, and the connection will all inherit $\phi$ dependence.  As mentioned, $\psi_{\pm}^{i}$, $\psi_{\pm}^{\bar{j}}$ are the fermionic fields and the bosonic fields are the local coordinates $\phi^{i}$ and $\phi^{\bar{j}}$.\footnote{Having the bosonic fields correspond to the local coordinates on a Riemannian manifold is an idea originating in `supersymmetric quantum mechanics.'}  

The supersymmetry (SUSY) transformations are generated by the four infinitesimal fermionic parameters $\alpha_{+}, \tilde{\alpha}_{+}, \alpha_{-}, \tilde{\alpha}_{-}$.  The first two are anti-holomorphic sections of $\overline{K}^{-1/2}$ and the latter two are holomorphic sections of $K^{-1/2}$.  We refer the reader to equation (2.5) in \cite{witten_mirror_1991} for the full form of the supersymmetry transformations.  Since we have four SUSY parameters, two of each chirality, we say the resulting theory has ``worldsheet $\mathcal{N}=(2,2)$ supersymmetry."

\subsection{Closed String A-Model}

Let $Y$ be the Calabi-Yau target space in the A-model.  We consider here a restricted symmetry such that $\tilde{\alpha}_{-} = \alpha_{+} =0$ and $\alpha = \alpha_{-} = \tilde{\alpha}_{+}$.  In other words, we have only one SUSY parameter which we call $\alpha$.  We now perform the first of two possible topological twists to construct the A-model topological string theory.  Consider the field $\chi \in \Gamma\big( \phi^{*} T_{X}\big)$ which projects into $\phi^{*}T_{X}^{(1,0)}$ as $\chi^{i} = \psi_{+}^{i}$ and into $\phi^{*}T_{X}^{(0,1)}$ as $\chi^{\bar{i}} = \psi_{-}^{\bar{i}}$.  We regard $\psi_{+}^{\bar{i}}$ as a $(1,0)$ form on $\Sigma$ valued in $\phi^{*}T_{X}^{(0,1)}$ and following \cite{witten_mirror_1991}, denote it as $\psi_{z}^{\bar{i}}$.  Likewise, $\psi_{-}^{i}$ is a $(0,1)$ form valued in $\phi^{*}T_{X}^{(1,0)}$, denoted $\psi_{\bar{z}}^{i}$.  The A-model SUSY transformations are 

\begin{equation} \label{eqn:SUSYAMod}
\begin{split}
& \delta \phi^{i} = i \alpha \chi^{i} \\
& \delta \phi^{\bar{i}} = i \alpha \chi^{\bar{i}} \\
& \delta \chi^{i} = \delta \chi^{\bar{i}} =0 \\
& \delta \psi_{z}^{\bar{i}} = - \alpha \partial_{z} \phi^{\bar{i}}-i \alpha \chi^{\bar{j}} \Gamma_{\bar{j} \bar{m}}^{\bar{i}} \psi_{z}^{\bar{m}} \\
& \delta \psi_{\bar{z}}^{i} = - \alpha \partial_{\bar{z}} \phi^{i}-i \alpha \chi^{j} \Gamma_{j m}^{i} \psi_{\bar{z}}^{m}
\end{split}
\end{equation}
where $\Gamma_{j m}^{i}$ is the holomorphic part of the Levi-Civita connection on the complexified tangent bundle and $\Gamma_{\bar{j} \bar{m}}^{\bar{i}}$ is the anti-holomorphic part.  Corresponding to the single SUSY parameter $\alpha$, we define the operator $Q$ to be its generator.  As such, the variation of any local operator $W$ under a SUSY transformation with parameter $\alpha$, is given by

\begin{equation}
\delta W = -i \alpha \{Q, W\}.
\end{equation}
One can show from the action that $Q^{2}=0$, on-shell.  This means that though there may be non-zero terms equated to $Q^{2}$, they will vanish if the equations of motion are satisfied.  Thus, we have a nilpotent operator $Q$ which is commonly referred to as a BRST operator.  With this in hand, we can rewrite the sigma model action as,

\begin{equation}
S = \int_{\Sigma} i\{Q, V\} - 2 \pi i \int_{\Sigma} \phi^{*}(B + i J), 
\end{equation}
where $V= 2 \pi g_{i \bar{j}}(\psi_{z}^{\bar{j}} \bar{\partial}\phi^{i} + \partial \phi^{\bar{j}}\psi_{\bar{z}}^{i})$ and $B + iJ \in H^{2}(Y, \mathbb{C})$ is the complexified K{\"a}hler form.  Given an operator $W$, we say $W$ is $Q$-closed if $\{Q,W\}=0$ and we say it is $Q$-exact if $W=\{Q,W'\}$, for some operator $W'$.  We also call a $Q$-closed operator `BRST invariant.'  We will take it as a fact that a correlation function of a $Q$-exact operator must vanish

 \begin{equation}
 \langle \{Q, W_{1}W_{2} \ldots \} \rangle =0.  
 \end{equation}
Let us assume that $W_{2}, W_{3}, \ldots$ are $Q$-closed operators, and consider the correlation function $\langle \{Q, W_{1}W_{2} \ldots\} \rangle$ for any operator $W_{1}$.  By the fact cited above, this correlation function vanishes.  Moreover, since $Q$ behaves like a differential, we can apply Leibniz' rule to get

 \begin{equation}
 0 = \langle \{Q, W_{1}W_{2} \ldots\}\rangle = \langle W_{1} \{Q, W_{2}W_{3} \ldots\} \rangle  + \langle \{Q, W_{1}\} W_{2} W_{3} \ldots \rangle.
 \end{equation}
Since $W_{2}, W_{3}, \ldots$ are $Q$-closed operators, the term $ \langle W_{1} \{Q, W_{2}W_{3} \ldots\} \rangle$ will vanish when expanded using Leibniz' rule.  All that remains is the correlation function $\langle \{Q, W_{1}\} W_{2} W_{3} \ldots \rangle$ involving one $Q$-exact operator and the rest, $Q$-closed.  Since the original correlation function vanished, clearly this one must too.  Therefore, the presence of even one $Q$-exact operator annihilates the correlation function.  In the topological subsector, the physical observables are products of local operators, all of which are $Q$-closed (i.e. BRST invariant).
 
We note that a shift in the action by a $Q$-exact operator $S \to S + \int_{\Sigma} \{Q, S'\}$ will leave all correlation functions invariant.  In the sigma model action, the only place the complex structure of $Y$ appears is in the term $V$.  If we deform the complex structure $V \to V+\delta V$, this leads to a deformation of the action $S \to S + \int_{\Sigma} \{Q, \delta V\}$, which will leave all physical observables invariant.  Thus, it appears that the A-model topological field theory is independent of the complex structure on $Y$.  Clearly, it explicitly depends on the K{\"a}hler structure on the target space, through the term $2 \pi i \int_{\Sigma}(B + iJ)$.  

By the SUSY transformations (\ref{eqn:SUSYAMod}) we have $\delta \chi^{i} = \delta \chi^{\bar{i}}=0$, where $\chi^{i}$ and $\chi^{\bar{i}}$ are the fermionic superpartners of $\phi_{i}$ and $\phi^{\bar{i}}$, respectively.  This means the operators $\chi^{i}$ and $\chi^{\bar{i}}$ are $Q$-closed.  Thus, we have a basis of local BRST invariant operators on $\Sigma$, which we can use to write a general operator as

\begin{equation}
W_{a} = a_{I_{1} \cdots I_{p}} \chi^{I_{1}} \cdots \chi^{I_{p}},
\end{equation}
where here the capital $I_{q}$ denotes unbarred indices, and

\begin{equation}
a = a_{I_{1} \cdots I_{p}} d \phi^{I_{1}} \cdots d\phi^{I_{p}},
\end{equation}
is a $p$-form on $Y$.  By computing the variation of the operator $W_{a}$, we find that $\{Q, W_{a}\} = -W_{da}$, with an important conclusion: 

\begin{center}
\textbf{A local operator $\bm{W_{a}}$ is $Q$-closed (BRST invariant) if and only if $\bm{da=0}$.  In other words, we can identify the $Q$-cohomology in the A-model with the de Rham cohomology $\bm{H^{*}(Y,\mathbb{C})}$ on the target space.  Notice this is consistent with the A-model being independent of the complex structure on $\bm{Y}$.}
\end{center}

A correlation function in the closed string A-model is given by the following path integral,

\begin{equation}
 \langle W_{a} W_{b} \cdots \rangle = \int \mathcal{D} \phi \mathcal{D} \psi \mathcal{D}\chi e^{-S} W_{a} W_{b} \cdots.
\end{equation}
Here, we will focus just on the bosonic map $\phi: \Sigma \to Y$.  It turns out that in the topological sector, we want to restrict to maps such that the term $\{Q, V\}$ in the action vanishes.  Looking at the form of $V$, we see that we must insist $\bar{\partial} \phi^{i} = \partial \phi^{\bar{i}}=0$, i.e. $\phi$ is a holomorphic map.  So instead of performing the path integral over \emph{all} maps, we localize to only the holomorphic ones.  In this context, such a holomorphic map is called a \emph{worldsheet instanton}.  We can consider the degree-$d$ worldsheet instantons and their moduli space $\mathcal{M}_{d}$.  For example, a degree-0 map simply sends all of $\Sigma$ to a point in $Y$, implying $\mathcal{M}_{0} =Y$.  We get the following reduction of the path integral

\begin{equation}
\int \mathcal{D} \phi \mathcal{D} \psi \mathcal{D}\chi \longrightarrow \sum_{d} \int_{\mathcal{M}_{d}} (\mathcal{D}\phi)_{d} \int \mathcal{D}\psi \mathcal{D}\chi.
\end{equation}

Since the relevant space of operators in the A-model is identified with the de Rham cohomology $H^{*}(Y, \mathbb{C})$, there is a natural grading by the degree of the forms.  In physics, this is called the \emph{ghost number}, meaning if $a \in H^{p}(Y, \mathbb{C})$, then the operator $W_{a}$ is said to have ghost number $p$.  One should imagine the worldsheet instantons to be ``wrapped" on the two-cycles in $Y$.  Roughly speaking, this explains the dependence of the A-model on the K\"{a}hler structure of $Y$, as the K\"{a}hler classes control relative volumes of the two-cycles.  As noted above, the A-model is independent of the complex structure.

\subsection{Closed String B-Model}

If we perform the opposite twist we get the closed string B-model where certain fields are simply sections of different bundles over $\Sigma$.  For purposes of anomaly cancellation, we will take $c_{1}(X)=0$, i.e. take the target space to be Calabi-Yau.  Define the following combinations of the fermionic fields, $\eta^{\bar{j}} = \psi_{+}^{\bar{j}} + \psi_{-}^{\bar{j}}$, and $\theta_{j} = g_{j \bar{k}}(\psi_{+}^{\bar{k}} - \psi_{-}^{\bar{k}})$ where now the fermionic fields are sections of the following bundles

\begin{equation}
\psi_{\pm}^{\bar{i}} \in \Gamma\big( \phi^{*} T_{X}^{(0,1)}\big), \,\,\,\,\,\, \psi_{+}^{i} \in \Gamma\big( K \otimes \phi^{*} T_{X}^{(1,0)}\big), \,\,\,\,\,\, \psi_{-}^{i} \in \Gamma\big( \overline{K} \otimes \phi^{*} T_{X}^{(1,0)}\big).
\end{equation}
Let $\rho^{i}$ be a one-form on $\Sigma$ valued in $\phi^{*}T_{X}^{(1,0)}$ whose $(1,0)$ part is $\psi_{+}^{i}$ and $(0,1)$ part is $\psi_{-}^{i}$.  The B-model SUSY transformations are,

\begin{equation} \label{eqn:SUSYBMod}
\begin{split}
& \delta \phi^{i} = 0 \\
& \delta \phi^{\bar{i}} = i \alpha \eta^{\bar{i}} \\
& \delta \eta^{\bar{i}} = \delta \theta_{i} =0 \\
& \delta \rho^{i} = - \alpha d \phi^{i}.
\end{split}
\end{equation}
The physical local observables are again given by products of BRST invariant fields,

\begin{equation}
W_{A} = A_{\bar{k}_{1} \ldots \bar{k}_{q}}^{j_{1} \ldots j_{p}} \eta^{\bar{k}_{1}} \ldots \eta^{\bar{k}_{q}} \theta_{j_{1}} \ldots \theta_{j_{p}}.
\end{equation}
Clearly, such an object is a $(0,q)$-form, valued in the bundle $\bigwedge^{p}T_{X}^{(1,0)}$.  Analogously to the A-model, we find that 

\begin{equation}
\{Q,W_{A}\} = - W_{\bar{\partial}A}.
\end{equation}

\begin{center}
\textbf{In other words, in the B-model the $Q$-cohomology is the Dolbeault cohomology on the target space $\bm{H^{0,q}(X, \bigwedge^{p}T_{X}^{(1,0)})}$, with forms valued in an exterior power of the holomorphic tangent bundle. } 
\end{center}

Also, like the A-model, the path integral localizes to only certain maps $\phi$, but in this case the condition is that $\bar{\partial}\phi^{\bar{k}} = \partial \phi^{\bar{k}} = 0$.  This can only be satisfied if $\phi$ is a constant map from the worldsheet into $X$.  Clearly, the moduli space of such maps is simply $\mathcal{M}_{0} = X$.  The upshot of this is that physical observables in the B-model are given simply by \emph{ordinary} integrals over the target space.  These are essentially the \emph{period integrals} over the non-vanishing holomorphic $(3,0)$-form $\Omega$.  When considering mirror symmetry, people often say something like, ``a hard computation on one side can be converted to a trivial computation on the other side."  This idea applies here: on the A-side, correlation functions require a sum of integrals over non-trivial moduli spaces, while on the B-side, the computation reduces to simply period integrals.  These period integrals are indicative of the dependence of the B-model on the complex structure of $X$ as well as the independence of the K\"{a}hler structure.

\subsection{Topological Field Theory vs. Topological String Theory?}

It is a good time to rectify a common confusion between \textit{topological field theories} and \textit{topological string theories}.  Simply put, we take a topological field theory to be a field theory such that there exists a subsector where the correlation functions are independent of the metric on the spacetime; in our case, the string worldsheet.  The correlation functions are then given by a path integral over the bosonic fields $\phi^{i}$ as well as the fermionic fields, described in the previous section.  However, we only implicitly mention a fixed metric $h_{\alpha \beta}$ on the string worksheet $\Sigma$ itself.  We certainly do not allow for dynamics of $h_{\alpha \beta}$, as it is not summed over in the path integral.  Topological string theory arises from including the worldsheet metric as a dynamical field, which we include in the path integral prescription for correlation functions.  We describe this as ``coupling a topological field theory to worldsheet gravity."  Thus, our correlation functions now involve a sum over the genus $g$ of $\Sigma$, as well as an integral over the moduli space of complex structures on $\Sigma$.  This should come as no surprise, since string theory is a theory of quantum gravity.  Indeed, quantum gravity is by definition a quantum field theory where the metric on spacetime (in this case, the worldsheet) is dynamical and included in the path integral.  The mathematically rigorous foundation of topological string theory is known as \textit{Gromov-Witten theory}.

\section{The Open String B-Model}

With the closed string theory in hand, we now endeavor to include open strings in the theory.  This simply amounts to allowing the worldsheet $\Sigma$ to have a boundary, denoted $\partial \Sigma$.  These worldsheet boundaries have the interpretation of open string endpoints.  Under the map $\phi: \Sigma \to X$, the image of $\partial \Sigma$ is required to live on certain special submanifolds of $X$ called \emph{D-branes}.  One should interpret the D-branes as providing boundary conditions on the open string endpoints: the endpoints are forced to lie \emph{on} the D-brane (Dirichlet boundary conditions), while they are allowed to move freely \emph{within} the D-brane itself (Neumann boundary conditions).  In more physical language, we say that D-branes are non-perturbative solutions of an effective field theory.  Interestingly enough, these non-perturbative solutions were actually expected for quite a long time.  However, the true magic of their discovery \cite{dai_new_1989} is that they allow for a two-dimensional analysis, via the open string worldsheet.  This was quite exciting and unexpected.  In other words, we expected some non-perturbative solutions to exist, but had no idea these objects would support open string endpoints.

We should immediately exorcise any confusions about the distinction between \textit{boundaries} of $\Sigma$ and \textit{punctures} in $\Sigma$.  With worldsheets involving only closed strings, the strings themselves are represented by ``loops" stretching out to the infinite past or future.  Using the conformal invariance of the worldsheet theory, we can map these to simply point-like punctures on the surface of $\Sigma$.  In the path integral prescription, these punctures are superficially filled in to give a compact Riemann surface, at the expense of inserting a vertex operator at that point, representing the closed string state.  Genuine \textit{boundaries} of $\Sigma$ are different, however.  A boundary component of $\Sigma$ is superficially partitioned by punctures.  These punctures represent open string states stretching out to the infinite past or future, while the remaining segments of the boundary component are precisely what we think of as the open string endpoints ``moving in time."

Let $X$ be a Calabi-Yau threefold.  To the roughest approximation, a \textit{Dp-brane} in the context of topological string theory is a real $p$-dimensional submanifold of $X$, i.e. a representative of a class in $H_{p}(X, \mathbb{Z})$.  The convention in topological string theory is that a Dp-brane has $p$ real, spatial dimensions in the Calabi-Yau and any number of dimensions in the non-compact spacetime.   

A D-brane however, is much more than just a submanifold.  As introduced above, D-branes support open string endpoints.  Hence, these open string endpoints appear as ``particle worldlines" in the $(p+1)$-dimensional worldvolume of the Dp-brane.  Indeed there are good physical reasons to interpret this as the D-brane giving rise to a quantum field theory or gauge theory on its worldvolume.  In the context of topological strings, we ignore the time direction and consider a gauge theory on simply the $p$-dimensional subspace of $X$.  In a gauge theory on a spacetime $Z$, the physical fields are connections on, or sections of a vector bundle associated to a principal bundle defined on $Z$.  Since the endpoints of open strings appear as gauge-theoretic particles in the D-brane, we are inclined to consider a D-brane as a submanifold along with a vector bundle supported on it.  In the B-model, the objects are holomorphic, so we take the bundles to be holomorphic.  Therefore as a first pass, we make the following na\"{i}ve definition of a D-brane:

\begin{naidefn}
A single Dp-brane in the B-model topological string theory, for $p=0, 2, 4, 6$ is a complex dimension $p/2$ holomorphic submanifold $Z$ of a Calabi-Yau threefold $X$ along with a holomorphic line bundle $L\to Z$.
\end{naidefn}

It will soon become apparent that a stack of multiple D-branes will correspond to certain stable higher rank bundles.  We would like to build the category of B-model D-branes such that the objects are defined on the ambient Calabi-Yau $X$.  Under the natural inclusion $Z \hookrightarrow X$ we can pushforward holomorphic vector bundles to sheaves on $X$.  Clearly such a pushforward is not a holomorphic vector bundle on $X$: vector bundles always have sections on small enough open sets, whereas this pushforward has no sections on any open set outside $Z$.  We must broaden our consideration from merely the geometrical category of holomorphic vector bundles to the algebraic or sheaf-theoretic category of coherent sheaves.  As we will see later, we actually must further enlarge our category.  We will be compelled to understand B-model D-branes as \emph{complexes} of coherent sheaves, modulo various equivalences.  To explain these ideas we introduce now some of the required algebraic geometry.

\subsection{Coherent Sheaves and D-branes}

For some of the foundational algebraic geometry to follow, I refer the reader to \cite{hartshorne_algebraic_1997, griffiths_principles_2014}.  Let $X$ be a compact, smooth complex manifold, or more generally a scheme, with $\mathcal{O}_{X}$ its structure sheaf of regular functions.  We begin by defining a sheaf-theoretic generalization of the notion of a module over a ring.  This is known as an $\mathcal{O}_{X}$-module, and is the largest category of sheaves we will need to consider.  It contains as subcategories the coherent sheaves and locally-free sheaves, which we will introduce shortly.

\begin{defn}
For $\mathscr{E}$ a sheaf on $X$, we say $\mathscr{E}$ is an $\mathcal{O}_{X}$-module, if for all open sets $U \subseteq X$, the sections $\mathscr{E}(U)$ constitute an $\mathcal{O}_{X}(U)$-module.  In addition, the restriction morphisms must be compatible with the module structure, in the following sense: consider nested open sets $V \subseteq U$ and define sections $f \in \mathcal{O}_{X}(U)$, $s \in \mathscr{E}(U)$.  We require that $(f \cdot s)|_{V} = f|_{V} \cdot s |_{V}$, where we denote the restriction morphism as the familiar function restriction.   
\end{defn}

Notice that $\mathcal{O}_{X}$-modules are a generalization of modules over a ring.  The intrinsic geometry of $X$ gives rise to the structure sheaf $\mathcal{O}_{X}$ which naturally assigns a ring $\mathcal{O}_{X}(U)$ to each open set.  It is precisely this ring of local functions which provides the multiplication, turning $\mathscr{E}(U)$ into an $\mathcal{O}_{X}(U)$-module.  Hence, an $\mathcal{O}_{X}$-module is really a sheaf of modules.  The $\mathcal{O}_{X}$-modules constitute an abelian category.  This should come as no surprise given that abelian categories are in some sense modeled on the category of modules over a ring.

Trivially, $\mathcal{O}_{X}$ itself is an $\mathcal{O}_{X}$-module.  More generally, $\mathcal{O}_{X}^{\oplus N}$ is an $\mathcal{O}_{X}$-module called `the free $\mathcal{O}_{X}$-module of rank $N$.'  A particularly refined subcategory of $\mathcal{O}_{X}$-modules is those which look locally like $\mathcal{O}_{X}^{\oplus N}$ for some $N$.  This leads to the following definition, which will allow us to identify certain special $\mathcal{O}_{X}$-modules with holomorphic vector bundles.

\begin{defn}
A sheaf $\mathscr{E}$ on $X$ is called locally-free of rank $N$ if there exists an open cover $\{U_{\alpha} \}$ of $X$ such that $\mathscr{E}(U_{\alpha}) \cong \mathcal{O}_{X}(U_{\alpha})^{\oplus N}$.   
\end{defn}

One can show that locally-free sheaves of rank $N$ form a category.  Given that vector bundles trivialize over special open sets, locally-free sheaves seem to correspond exactly to holomorphic vector bundles.  The correspondence is made precise by the following Proposition.

\begin{proppy}
There exists a one-to-one correspondence between holomorphic vector bundles of rank $N$ on $X$ and locally-free sheaves of rank $N$ on $X$.   
\end{proppy}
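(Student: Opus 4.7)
The plan is to construct maps in both directions and verify they are mutually inverse (up to natural isomorphism). The forward direction is the sheaf of holomorphic sections; the reverse direction reconstructs a bundle from gluing data encoded in local trivializations of the sheaf.

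First, for a rank-$N$ holomorphic vector bundle $\pi:E\to X$, I would define the presheaf $\mathscr{E}$ by
\begin{equation}
\mathscr{E}(U) = \{s:U\to E \mid s \text{ holomorphic},\; \pi\circ s = \mathrm{id}_U\}.
\end{equation}
Verification that $\mathscr{E}$ is a sheaf is standard since holomorphicity and the section condition are local. The $\mathcal{O}_X$-module structure comes from pointwise multiplication, $(f\cdot s)(x) = f(x)\,s(x)$, where the fiber $E_x$ is a $\mathbb{C}$-vector space and multiplication by $f(x)\in\mathcal{O}_{X,x}$ makes sense. Compatibility with restriction is immediate. To show local freeness of rank $N$, I would take a trivializing cover $\{U_\alpha\}$ of $E$ with biholomorphisms $\psi_\alpha: E|_{U_\alpha}\xrightarrow{\sim} U_\alpha\times\mathbb{C}^N$, and observe that a holomorphic section of $E$ on $U_\alpha$ is the same as a holomorphic map $U_\alpha\to\mathbb{C}^N$, which is precisely $\mathcal{O}_X(U_\alpha)^{\oplus N}$.

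For the reverse direction, given a locally-free $\mathscr{E}$ of rank $N$, choose a trivializing cover $\{U_\alpha\}$ with $\mathcal{O}_X$-module isomorphisms $\varphi_\alpha:\mathscr{E}|_{U_\alpha}\xrightarrow{\sim}\mathcal{O}_X^{\oplus N}|_{U_\alpha}$. On overlaps, the transition maps $g_{\alpha\beta} := \varphi_\alpha\circ\varphi_\beta^{-1}$ are automorphisms of $\mathcal{O}_X^{\oplus N}|_{U_\alpha\cap U_\beta}$, and hence elements of $GL_N(\mathcal{O}_X(U_\alpha\cap U_\beta))$; their entries are holomorphic because the $\varphi_\alpha$ are morphisms of sheaves of $\mathcal{O}_X$-modules. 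They satisfy the cocycle condition $g_{\alpha\beta}g_{\beta\gamma}=g_{\alpha\gamma}$ on triple overlaps by construction. I would then glue $\{U_\alpha\times\mathbb{C}^N\}$ along these transition functions to produce a holomorphic rank-$N$ vector bundle $E_\mathscr{E}\to X$. Independence of the constructed bundle (up to isomorphism) from the chosen trivialization follows since a different choice produces a cohomologous cocycle in $H^1(X,GL_N(\mathcal{O}_X))$.

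Finally, I would verify that the two constructions are mutually inverse. Starting with $E$, passing to $\mathscr{E}$, and then reconstructing a bundle, one recovers precisely the transition functions of $E$ (since $\mathscr{E}(U_\alpha)\cong\mathcal{O}_X(U_\alpha)^{\oplus N}$ is implemented by the same trivializations). Conversely, starting with $\mathscr{E}$ and constructing $E_\mathscr{E}$, the sheaf of holomorphic sections of $E_\mathscr{E}$ is naturally isomorphic to $\mathscr{E}$, since on each $U_\alpha$ both are identified with $\mathcal{O}_X^{\oplus N}|_{U_\alpha}$ compatibly with the transition data. I would then package this as a natural bijection on isomorphism classes (or, preferably, an equivalence of categories, which is the more honest statement). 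The main obstacle is bookkeeping: verifying that all identifications are canonical enough that the bijection is well-defined and functorial. In particular, care is needed so that the choice of trivializing cover in the reverse direction does not pollute the equivalence class of the resulting bundle, which is exactly where the cocycle viewpoint (identifying both sides with $H^1(X,GL_N(\mathcal{O}_X))$) provides the cleanest resolution.
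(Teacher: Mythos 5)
Your proposal is correct and follows essentially the same route as the paper's (sketched) proof: the sheaf of holomorphic sections in one direction, and reconstruction via transition functions from the local trivializations $\mathscr{E}(U_{\alpha}) \simeq \mathcal{O}_{X}(U_{\alpha})^{\oplus N}$ in the other. You simply supply the cocycle and mutual-inverse verifications that the paper omits.
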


\begin{proof}
The proof here is very elementary, and we only sketch it.  Given a holomorphic vector bundle $E$ on $X$, for all open sets $U$, define $\mathscr{E}(U)$ to be the sections of the vector bundle over $U$.  Since the vector bundle must trivialize, this resulting sheaf will of course be locally-free.  Conversely, given a locally-free sheaf $\mathscr{E}$, using the given isomorphism $\mathscr{E}(U_{\alpha}) \simeq \mathcal{O}_{X}(U_{\alpha})^{\oplus N}$, we can define holomorphic transition functions, which will produce a holomorphic vector bundle $E$.  
\end{proof}
\noindent Given holomorphic vector bundles $E$ and $F$, we will usually denote their corresponding locally-free sheaves by $\mathscr{E}$ and $\mathscr{F}$, respectively.

\subsection*{D6-branes and Locally-free Sheaves}

In topological string theory on a Calabi-Yau threefold $X$, when we talk about ``space-filling branes" we mean a D6-brane whose underlying homology class is a multiple of the fundamental class of $X$.  Quite simply, D6-branes are in one-to-one correspondence with locally-free sheaves on $X$.  This provides a translation between a precise mathematical notion and a phrase appearing frequently in the physics literature:

\begin{center}
\textbf{A stack of $\bm{N}$ D6-branes wrapping a Calabi-Yau threefold $\bm{X}$ corresponds to a rank $\bm{N}$ locally-free sheaf on $\bm{X}$.}
\end{center}

\noindent On a D6-brane, we specify purely Neumann boundary conditions, which allow the open string endpoint to move freely within $X$.  This choice corresponds to the constraint

\begin{equation}
\theta_{j} = g_{j \bar{k}}(\psi_{+}^{\bar{k}} - \psi_{-}^{\bar{k}})=0.
\end{equation}
Like we saw in the brief analysis of the closed string $B$-model, the BRST operator $Q$ is taken to be the Dolbeault operator $\bar{\partial}$, and we only take our local operators on the worldsheet to consist of $Q$-closed local operators.  Recalling the SUSY transformations (\ref{eqn:SUSYBMod}), these are precisely $\theta_{j}$ and $\eta^{\bar{j}}$.  But the space-filling condition forces the $\theta_{j}$ to vanish, so our local operators will only depend on $\eta^{\bar{j}}$, and of course $\phi$.  Thus, since $\bar{j}$ is an anti-holomorphic index, we conclude that our local operators must be $(0,q)$-forms, possibly valued in some bundle.

Let us attempt to construct a well-defined D-brane category, assuming at first that the only objects are D6-branes.  By the above correspondence, the objects are simply given by a bundle $E \to X$.  To give a pair of objects, is to give a pair of bundles on $X$, $E_{1} \to X$ and $E_{2} \to X$.  Since these are bundles over the same base manifold, we can define $\rm{Hom}(E_{1}, E_{2})$ to be the bundle morphisms between them.  It will be useful to note here that $\textnormal{Hom}(E_{1}, E_{2}) \simeq E_{1}^{*} \otimes E_{2}$ is itself a vector bundle with fiber defined as $\textnormal{Hom}(E_{1}, E_{2})(x) = \textnormal{Hom}(E_{1}(x), E_{2}(x))$, for all $x \in X$.  

We then take our local operators representing an open string state to be $W_{A}$, where $A$ is a $(0,q)$-form valued in the bundle $\textnormal{Hom}(E_{1}, E_{2})$.  Therefore, it is natural to define the morphisms from $E_{1} \to X$ to $E_{2} \to X$ (equivalently the open string states stretching from one D6-brane to the other), to be the Dolbeault cohomology group

\[H^{0,q}_{\bar{\partial}}\big(X, \textnormal{Hom}(E_{1}, E_{2})\big).\]
And by the familiar $\check{C}$ech-Dolbealt isomorphism, the Dolbeault cohomology group above is isomorphic to $\check{C}$ech cohomology

\begin{equation} \label{eqn:openstringst}
H^{0,q}_{\bar{\partial}}\big(X, \textnormal{Hom}(E_{1}, E_{2})\big) \simeq \check{H}^{q}\big(X, \mathscr{H}\textnormal{om}(\mathscr{E}_{1}, \mathscr{E}_{2})\big),
\end{equation}
where $\mathscr{E}_{1}$ and $\mathscr{E}_{2}$ are the locally-free sheaves corresponding to the vector bundles $E_{1}$ and $E_{2}$.  In the B-model, specifically in the case of space-filling branes, we can unambiguously assign a `ghost number' $q$ to an open string.  We will see that this will be less clean when considering branes of non-zero codimension.  

As a simple example, we can compute a three-point correlator of open string states \cite{aspinwall_d-branes_2004}.  Consider three D6-branes corresponding to holomorphic vector bundles $E_{1}$, $E_{2}$, and $E_{3}$.  Let us call the three local operators $W_{A}$, $W_{B}$, and $W_{C}$, where

\begin{equation}
A \in H^{0,1}_{\bar{\partial}}\big(X, \textnormal{Hom}(E_{1}, E_{2})\big), \,\,\,\,\, B \in H^{0,1}_{\bar{\partial}}\big(X, \textnormal{Hom}(E_{2}, E_{3})\big), \,\,\,\,\,C \in H^{0,1}_{\bar{\partial}}\big(X, \textnormal{Hom}(E_{3}, E_{1})\big).
\end{equation}
Recall that in the B-model, since instantons are suppressed, the correlation functions are given simply by integrals over $X$.  Indeed, the path integrals in the topological sector include only contributions from the moduli space $\mathcal{M}_{0}$ of degree zero harmonic maps into $X$.  But of course, these are simply constant maps, and $\mathcal{M}_{0} = X$.  This implies,

\begin{equation}
\langle W_{A} W_{B} W_{C} \rangle = \int_{X} \textnormal{Tr}(A \wedge B \wedge C) \wedge \Omega.
\end{equation}
The `integrand' is a $(3,3)$-form, which is natural to integrate over a threefold.  Of course, when wedging forms valued in the bundle $\textnormal{Hom}(E_{i}, E_{j})$, we implicitly compose the morphisms.

\subsection*{The Mukai Vector and D-brane Charges}

We have seen that when considering only D6-branes on a threefold, it sufficed to model them as objects in the category of locally-free sheaves.  The goal of this section is to gently acquaint the reader with some of the more general coherent sheaves needed to formalize D4, D2, and D0-branes.  For a rigorous definition of coherent sheaves, see \cite{hartshorne_algebraic_1997}.  For my purposes, it will suffice to think very roughly of coherent sheaves as the minimal, full abelian category arising as the ``completion" of the category of locally-free sheaves upon adding all kernels and cokernels.  

\begin{naidefn}
In the B-model topological string on a Calabi-Yau threefold $X$, a D-brane corresponds to a stable\footnote{One can use either slope stability or Gieseker stability, but I will omit discussions of stability here.} coherent sheaf $\mathcal{F}$ on $X$.  The support of the sheaf $\text{supp}(\mathcal{F})$ defines the underlying homology class of the D-brane.  
\end{naidefn}

\noindent Branes need not be pure dimensional.  For example, a coherent sheaf $\mathcal{F}$ can be supported on curves and points.  We interpret such an $\mathcal{F}$ as a \emph{bound state} of D0-D2 branes.  Such brane configurations occur, for example, in Donaldson-Thomas theory.  A helpful device for guiding intuition here is the \emph{Mukai vector} or equivalently, the \emph{D-brane charges} \cite{harvey_algebras_1998} associated to a coherent sheaf.

\begin{defn}
Let $X$ be a smooth $n$-dimensional variety and let $\mathscr{F}$ be a coherent sheaf on $X$.  The Mukai vector is defined to be
\begin{equation}
v(\mathscr{F}) = \text{ch}(\mathscr{F}) \sqrt{\text{td}(X)} = (v_{0}, \ldots, v_{n}) \in H^{2*}(X, \mathbb{Q}).
\end{equation}
If $X$ is also projective, then the D-brane charge is given simply by the Poincar\'{e} dual of the Mukai vector\footnote{In \cite{halverson_perturbative_2015}, the authors introduce `gamma classes' which encode corrections to the factor of $\sqrt{\text{td}X}$.}

\begin{equation}
\mathcal{Q}(\mathscr{F}) = \text{PD}\big( \text{ch}(\mathcal{F})\sqrt{\text{td}(X)} \big)  \in H_{2*}(X, \mathbb{Q}).
\end{equation}
By convention, we order the charges as $\mathcal{Q}(\mathscr{F})= (\mathcal{Q}_{n}, \ldots, \mathcal{Q}_{0})$, where $\mathcal{Q}_{i} \in H_{2i}(X, \mathbb{Q})$ is called the D$2i$-charge. 
\end{defn}

Recall that based on the na\"{i}ve definition, we concluded that the coherent sheaves which most directly correspond to physical D-branes are pushforwards of holomorphic vector bundles along inclusions.\footnote{This is not \emph{quite} true.  Due to a phenomenon related to the \emph{Freed-Witten anomaly}, one must also tensor by $K_{Z}^{-1/2}$ where $K_{Z}$ is the canonical bundle of $Z$.  There is a nice discussion of this in \cite{sharpe_lectures_2003,aspinwall_d-branes_2004}.}  Let $X$ be an $n$-dimensional smooth, projective variety and let $\iota: Z \hookrightarrow X$ be the inclusion of the $m$-dimensional subvariety $Z$ into $X$.  In addition, let $E$ be a rank $N$ holomorphic vector bundle on $Z$.

\begin{lemmy}
Given $X$, $Z$, and $E$ as described above, we have
\begin{equation}
\begin{split}
&\,\,\,\,\, \text{ch}_{k}(\iota_{*}E) =0 , \,\,\,\,\,\, \text{for all} \,\, k<n-m, \\
&\text{PD}\big(\text{ch}_{n-m}(\iota_{*}E)\big) = N[Z] \in H_{2m}(X, \mathbb{Q}).
\end{split}
\end{equation}
\end{lemmy}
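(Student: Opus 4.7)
The natural tool is the Grothendieck--Riemann--Roch theorem applied to the closed immersion $\iota : Z \hookrightarrow X$.  Assuming $Z$ is smooth---the setting relevant to a D-brane wrapped on $Z$---GRR produces the identity
\begin{equation}
\textnormal{ch}(\iota_{*}E)\cdot \textnormal{td}(X) = \iota_{*}\bigl(\textnormal{ch}(E)\cdot \textnormal{td}(Z)\bigr)
\end{equation}
in $H^{*}(X,\mathbb{Q})$.  My plan is to extract both assertions by tracking cohomological degree on each side of this identity.

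First I would observe that the Gysin pushforward $\iota_{*}$ raises cohomological degree by $2(n-m)$, the real codimension of $Z$ in $X$.  Therefore the right-hand side has no components in degrees strictly less than $2(n-m)$.  Since $\textnormal{td}(X) = 1 + (\text{higher-degree terms})$ is a unit in the cohomology ring, this vanishing transfers directly to $\textnormal{ch}(\iota_{*}E)$, yielding $\textnormal{ch}_{k}(\iota_{*}E) = 0$ for all $k < n-m$.

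For the second assertion I would equate the degree-$2(n-m)$ parts on both sides.  On the left, the leading Todd contribution is $1$, so the component in question is exactly $\textnormal{ch}_{n-m}(\iota_{*}E)$.  On the right, the only bottom-degree contribution comes from the degree-zero part of $\textnormal{ch}(E)\cdot \textnormal{td}(Z)$, namely $\textnormal{rk}(E) = N$.  This produces $N \cdot \iota_{*}(1_{Z})$, and by the defining property of the Gysin map $\iota_{*}(1_{Z}) = \textnormal{PD}_{X}([Z])$.  Applying $\textnormal{PD}$ then yields $N[Z] \in H_{2m}(X,\mathbb{Q})$ as required.

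The principal subtlety---and the main place one has to be careful---is the smoothness hypothesis on $Z$, which is needed so that GRR in its standard form and the identity $\iota_{*}(1_{Z}) = \textnormal{PD}([Z])$ are both directly available.  For a singular $Z$ one would replace GRR with the Baum--Fulton--MacPherson version or pass through a resolution of singularities; in either case only the bottom-degree behavior of $\textnormal{ch}(\iota_{*}E)$ enters, so the conclusions of the lemma are unchanged.
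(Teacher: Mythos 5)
Your proof is correct. The paper itself gives no argument here---it simply cites Fulton's \emph{Intersection Theory} for ``a straightforward computation''---and the Grothendieck--Riemann--Roch argument you give (pushforward raises degree by $2(n-m)$, $\textnormal{td}(X)$ is a unit with constant term $1$, and the bottom-degree term is $N\,\iota_{*}(1_{Z}) = N\cdot\textnormal{PD}([Z])$) is precisely the standard computation that citation points to, with the smoothness caveat on $Z$ appropriately flagged.
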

\begin{proof}
This is a straightforward computation which can be found, for example, in \cite{fulton_intersection_1998}.  
\end{proof}
\noindent This simple result about the Chern character of pushforwards of vector bundles, immediately implies the following corollary about the D-brane charges.  

\begin{cory} \label{cory:DbrCh}
Again given $X$, $Z$, and $E$ as above, the D-brane charges satisfy
\begin{equation}
\begin{split}
&\,\,\,\,\, \mathcal{Q}_{k}(\iota_{*}E) =0 , \,\,\,\,\,\, \text{for all} \,\, k > n-m, \\
& \mathcal{Q}_{m}(\iota_{*}E)  = N[Z]  \in H_{2m}(X, \mathbb{Q}).
\end{split}
\end{equation}
\end{cory}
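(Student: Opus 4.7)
My plan is to bootstrap the Corollary directly from the Lemma by translating the Chern-character data into homological D-brane charge data via Poincar\'e duality, verifying along the way that the $\sqrt{\text{td}(X)}$ factor appearing in the Mukai vector does not spoil either the vanishing range or the leading coefficient.

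The first step is to unpack the degree structure of $\sqrt{\text{td}(X)}$. Since $\text{td}(X) = 1 + \tfrac{1}{2}c_{1}(X) + \cdots$ has constant term $1$, the same is true of its formal square root, so $\sqrt{\text{td}(X)} = 1 + (\text{higher-degree terms})$ in $H^{\mathrm{even}}(X,\mathbb{Q})$. In the product $\text{ch}(\iota_{*}E)\sqrt{\text{td}(X)}$ the piece sitting in cohomological degree $2k$ can receive contributions only from $\text{ch}_{j}(\iota_{*}E)$ with $j \le k$. Applying the Lemma, these Chern-character pieces all vanish for $j < n-m$, which forces the Mukai-vector components $v_{k}(\iota_{*}E)$ to vanish for all $k < n-m$ as well. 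Moreover, in the minimal non-vanishing cohomological degree $2(n-m)$ only the $j = n-m$ slot of $\text{ch}$ contributes (multiplied by the constant $1$ of $\sqrt{\text{td}(X)}$), so $v_{n-m}(\iota_{*}E) = \text{ch}_{n-m}(\iota_{*}E)$.

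The second step is Poincar\'e duality, which sends $H^{2k}(X,\mathbb{Q}) \to H_{2(n-k)}(X,\mathbb{Q})$. The vanishing $v_{k}(\iota_{*}E) = 0$ for $k < n-m$ becomes $\mathcal{Q}_{n-k}(\iota_{*}E) = 0$ whenever $n-k > m$, i.e.\ the vanishing of all D-brane charges $\mathcal{Q}_{i}(\iota_{*}E)$ sitting above the expected brane dimension; this is the first claim of the Corollary. For the leading non-zero charge, we compute
\[
\mathcal{Q}_{m}(\iota_{*}E) \;=\; \text{PD}\bigl(v_{n-m}(\iota_{*}E)\bigr) \;=\; \text{PD}\bigl(\text{ch}_{n-m}(\iota_{*}E)\bigr) \;=\; N[Z],
\]
directly from the Lemma, which is the second claim.

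There is essentially no substantive obstacle: the Corollary is a purely formal consequence of the Lemma once one keeps straight the translation between cohomological indexing of $\text{ch}$ and homological indexing of $\mathcal{Q}$, and verifies that the correction $\sqrt{\text{td}(X)} - 1$ lives strictly in positive cohomological degree and therefore cannot move the lowest-degree contribution. The only nontrivial input — namely the computation of $\text{ch}(\iota_{*}E)$ for the inclusion of a smooth subvariety — has already been absorbed into the Lemma via the Grothendieck-Riemann-Roch machinery indicated in the cited reference.
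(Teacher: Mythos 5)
Your proof is correct and takes essentially the same approach as the paper: both arguments observe that $\sqrt{\text{td}(X)}$ has constant term $1$, so the Lemma's vanishing of $\text{ch}_{j}(\iota_{*}E)$ for $j<n-m$ passes to the Mukai vector and the lowest nonvanishing component equals $\text{ch}_{n-m}(\iota_{*}E)$, after which Poincar\'{e} duality gives both claims. Your more explicit bookkeeping in the duality step (sending $H^{2k}$ to $H_{2(n-k)}$, hence vanishing of $\mathcal{Q}_{i}$ for $i>m$) is if anything slightly cleaner than the paper's phrasing of the vanishing range.
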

\begin{proof}
Using the Lemma, it follows that $\big(\text{ch}(\iota_{*}E)\sqrt{\text{td}(X)}\big)_{k}=0$ for all $k<n-m$.  Poincar\'{e} dualizing, this shows that all entires in the D-brane charge vanish for $k>n-m$, thus proving the first claim.  Note that $\big(\sqrt{\text{td}(X)}\big)_{0}=1$, and so

\begin{equation}
\big(\text{ch}(\iota_{*}E)\sqrt{\text{td}(X)}\big)_{n-m}=\text{ch}_{n-m}(\iota_{*}E).
\end{equation}
\noindent By Poincar\'{e} dualizing and applying the Lemma once more, the second claim follows.  
\end{proof}

\noindent This corollary provides a precise mathematical translation of a phrase, prevalent in the physics literature, generalizing one made earlier about D6-branes and locally-free sheaves:

\begin{center}
\textbf{In physics, one often hears about ``a stack of $\bm{N}$ D-branes wrapping a holomorphic cycle $\bm{Z \subseteq X}$."  Mathematically, this corresponds to a rank $\bm{N}$ holomorphic vector bundle on $\bm{Z}$.}
\end{center}

Let us introduce now a few of the familiar coherent sheaves one might encounter on a Calabi-Yau threefold $X$.  It has been previously observed that D6-branes correspond to locally-free sheaves.  In non-zero codimension, D4, D2, and D0-branes correspond to \emph{torsion sheaves}.  A torsion sheaf is a coherent sheaf $\mathscr{F}$ of \emph{rank zero}, which is encoded into the Mukai vector as $v_{0}=0$, or equivalently into the D-brane charges as $\mathcal{Q}_{3}=0$.  

Let $Z$ be a holomorphic subvariety of $X$.  This gives rise to a short exact sequence

\begin{equation} \label{eqn:exseqidsheaf}
0 \to \mathcal{I}_{Z} \to \mathcal{O}_{X} \to \mathcal{O}_{Z} \to 0,
\end{equation}
where $\mathcal{O}_{Z}$ is the structure sheaf on $Z$ and $\mathcal{I}_{Z}$ is called an \emph{ideal sheaf}.  In algebraic geometry, an ideal sheaf on $X$ is a rank one torsion-free sheaf $\mathcal{I}_{Z}$ with trivial determinant.  There is necessarily an injective sheaf morphism $\mathcal{I}_{Z} \to \mathcal{O}_{X}$ and the cokernel defines a subscheme $Z \subseteq X$ along with the short exact sequence above.  If $Z$ is a divisor, then $\mathcal{I}_{Z}$ is actually a line bundle, and $\mathcal{O}_{Z}$ is an example of a D4-brane.  If $Z$ is supported only on curves and points, then $\mathcal{O}_{Z}$ indeed corresponds to D2 or D0-branes, as expected.  However, in that case $\mathcal{I}_{Z}$ is a rank one torsion-free sheaf which is not locally-free.  

Ideal sheaves have no immediate interpretation as D-branes.  However, notice that because the Chern character is additive on short exact sequences, applying $\mathcal{Q}$ to (\ref{eqn:exseqidsheaf}), the D-brane charges are seen to satisfy

\begin{equation}
\mathcal{Q}(\mathcal{O}_{X}) = \mathcal{Q}(\mathcal{I}_{Z}) + \mathcal{Q}(\mathcal{O}_{Z}),
\end{equation}
which looks like a manifestation of charge conservation.  This is perhaps hinting that an ideal sheaf may have an interpretation as a bound-state of a brane ($\mathcal{O}_{X}$) and a suitably defined anti-brane ($\mathcal{O}_{Z}$) coupled via a map $\mathcal{O}_{X} \to \mathcal{O}_{Z}$.

\subsection{Summary and Outlook}

Roughly speaking, one may think of the category of coherent sheaves $\text{Coh}(X)$ as containing all of the locally-free sheaves on $X$, plus all of the ideal sheaves, structure sheaves, and pushforwards of sheaves arising from holomorphic vector bundles on subvarieties.  Thus, if we want to expand beyond the world of vector bundles, considering the coherent sheaves is the most natural first step.  We hope to argue that the derived category $D^{b}\text{Coh}(X)$ will be large enough to contain all B-model D-branes.   In the following section we will introduce some of the machinery of homological algebra and sheaf cohomology.  There are at least two indications so far that such machinery should be important.  

Recall that we have only done one computation in this section: in the case of two D6-branes, we computed the spectrum of open string states stretching between the branes.  Here we used that D6-branes correspond to vector bundles $E_{1} \to X$ and $E_{2} \to X$, and since they share a common base space, the group of morphisms $\text{Hom}(E_{1}, E_{2})$ was well-defined.  But in higher codimension, branes need not intersect, and certainly will not be given simply by a locally-free sheaf.  For example, we can have a brane supported on a divisor, and another supported on a curve with open strings stretching between.  Or we can have a stack of $N$ D0-branes supporting open string endpoints.  In this setting, it is natural to expect the \emph{Ext Groups} to encode the open string spectra, as they are a natural generalization of bundle morphisms.

In addition, the ideal sheaf short exact sequence we encountered is perhaps hinting that we should consider \emph{complexes} of coherent sheaves.  We saw that the application of the D-brane charge $\mathcal{Q}$ to such a short exact sequence seems to encode a charge conservation.  The physical BRST formalism provides a natural grading by the ghost number, so we can consider a D-brane as a direct sum, graded by the ghost number.  Turning on VEVs for a tachyon field, will deform this direct sum to a genuine complex.  The second motivation to consider complexes, comes from the general philosophy of resolutions.  It's often beneficial to replace an arbitrary element of a category by a tower of ``pleasant" objects.  In other words, you have resolved the object by a complex of nice objects.  The coherent sheaves we find to be particularly pleasant are the locally-free sheaves associated to space-filling branes.  Given a coherent sheaf which is not locally-free (coming from a D0-, D2-, or D4-brane) we can find a locally-free resolution.

Once in the category of complexes of coherent sheaves, the glaring question is, are there physical reasons to identify complexes up to homotopy and quasi-isomorphism?  Remarkably, the answer is conjecturally, yes.  Identifying homotopic maps between D-branes will be natural from the BRST formalism.  We will interpret quasi-isomorphic complexes to be in the same ``universality class" of Renormalization Group flow on the worldsheet.  Moreover, we can realize this flow as brane/anti-brane annihilation via a non-zero tachyon VEV.

\section{Sheaf Cohomology, Derived Functors, and Ext Groups}

We begin with a few remarks pertaining to the global sections of a sheaf.  We assume the reader is familiar with $\check{C}$ech cohomology.  

\begin{rmkk}
Given a sheaf $\mathscr{F}$ on $X$, the zeroth $\check{C}$ech cohomology group computes the global sections,

\[\Gamma(X, \mathscr{F}) \simeq \check{H}^{0}(X, \mathscr{F}).\]
\end{rmkk}

\begin{rmkk}
Given an $\mathcal{O}_{X}$-module $\mathscr{F}$, the global sections of $\mathscr{F}$ correspond to morphisms $\mathcal{O}_{X} \to \mathscr{F}$,

\[\Gamma(X, \mathscr{F}) \simeq \textnormal{Hom}(\mathcal{O}_{X}, \mathscr{F}).\]
\end{rmkk}
\noindent We should also record the familiar isomorphism between $\check{C}$ech cohomology and Dolbeault cohomology.  

\begin{rmkk}
Let $\Omega^{p}$ be the sheaf of holomorphic $p$-forms on $X$.  The $\check{C}$ech-Dolbeault isomorphism states that
\begin{equation}
H^{p,q}_{\bar{\partial}}(X) \simeq \check{H}^{q}(X, \Omega^{p}).
\end{equation}
More generally, we can let $E$ be a holomorphic vector bundle on $X$, with corresponding locally-free sheaf $\mathscr{E}$.  The generalized $\check{C}$ech-Dolbeault isomorphism relates $(p,q)$-forms valued in $E$ to the sheaf $\mathscr{E} \otimes \Omega^{p}$
\begin{equation}
H^{p,q}_{\bar{\partial}}(X,E) \simeq \check{H}^{q}(X, \mathscr{E} \otimes \Omega^{p}).
\end{equation}
\end{rmkk}

One important idea will be that of \emph{resolutions}.  The general philosophy of resolutions is that given an arbitrary object $A$ in some category, it might be preferable to replace $A$ by a tower of especially pleasant objects in the category.  One often speaks of injective, projective, flasque/flabby, or free resolutions.  These focus our attention on especially nice, or rigid objects in the category.  This provides a way of defining \emph{derived functors} which can be evaluated at such arbitrary objects $A$.  This can be done in some generality in the category of $R$-modules over a ring $R$.  However, we will focus on the category of $\mathcal{O}_{X}$-modules.  In this category, using resolutions to define derived functors will immediately give a definition of sheaf cohomology.  This sheaf cohomology is extremely abstract, so is not terribly helpful in explicit computations, but it agrees with $\check{C}$ech cohomology, and will allow for the definition of the Ext groups.

Given an injective resolution of some object $A$,

\begin{equation}
\begin{CD}
0 @>>> A @>>> \mathscr{I}_{0} @>>> \mathscr{I}_{1} @>>> \ldots 
\end{CD}
\end{equation}
and a left-exact functor $F$, we get a complex

\begin{equation}
\begin{CD}
0 @>>> F(\mathscr{I}_{0}) @>>> F(\mathscr{I}_{1}) @>>> F(\mathscr{I}_{2}) @>>> \ldots 
\end{CD}
\end{equation}
We define the n$^{\textnormal{th}}$ right derived functor of $F$ at $A$, denoted $\textbf{R}^{n}F(A)$, to be the n$^{\textnormal{th}}$ cohomology of the above sequence.  From here on, we will restrict attention to the category of $\mathcal{O}_{X}$-modules, with the primary left-exact functor of interest being $\textnormal{Hom}(\mathcal{O}_{X}, -)$.  By an earlier remark, we may also refer to $\textnormal{Hom}(\mathcal{O}_{X}, -)$ as the \emph{global section functor} since acting on any sheaf results in the group of global sections.  We arrive finally at the important definition of sheaf cohomology

\begin{defn}
We define sheaf cohomology for $\mathcal{O}_{X}$-modules to be the right derived functor of the left-exact global sections functor $\textnormal{Hom}(\mathcal{O}_{X}, -)$.  Given an $\mathcal{O}_{X}$-module $\mathscr{F}$, then the n$^{\textnormal{th}}$ sheaf cohomology group of $\mathscr{F}$ is
\begin{equation}
H^{n}(X, \mathscr{F}) = \textbf{R}^{n} \textnormal{Hom}(\mathcal{O}_{X}, -)(\mathscr{F}).
\end{equation}
\end{defn}
The most pressing point to be made after a definition using derived functors, is that the result is \emph{independent} of the particular resolution we chose.  Resolutions of a given object are generally far from unique, and it would clearly be problematic if we got a different result for sheaf cohomology depending on which resolution we chose; this is not the case.

\begin{rmkk}
The 0$^{\textnormal{th}}$ sheaf cohomology group computes the global sections of the sheaf.  In particular, it agrees with $\check{C}$ech cohomology.
\end{rmkk}

\begin{proof}
The proof here is very straightforward.  In general for a right derived functor, we have $\textbf{R}^{0}F(A) = F(A)$, since the functor $F$ is left-exact.  Using this, we compute

\begin{equation}
H^{0}(X, \mathscr{F}) = \textbf{R}^{0} \textnormal{Hom}(\mathcal{O}_{X}, -)(\mathscr{F})=\textnormal{Hom}(\mathcal{O}_{X}, \mathscr{F})= \mathscr{F}(X)
\end{equation}
\end{proof}
The definition of sheaf cohomology using derived functors is quite abstract.  In fact, it's so abstract, that it's essentially immune to computations.  However, this same abstraction makes it incredibly elegant to use in theory building.  When needed for actual computations, the best approach is to prove that it is isomorphic to something like $\check{C}$ech cohomology which is far more computable.

\begin{thm}
Given an $\mathcal{O}_{X}$-module $\mathscr{F}$, the $\check{C}$ech and sheaf cohomologies are isomorphic,
\begin{equation}
H^{n}(X, \mathscr{F}) \simeq \check{H}^{n}(X, \mathscr{F}).
\end{equation}
\end{thm}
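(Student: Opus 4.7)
The plan is to exploit the universal property of derived functors. Both $\check{H}^{\bullet}(X, -)$ and $H^{\bullet}(X, -)$ are cohomological $\delta$-functors on the category of $\mathcal{O}_{X}$-modules: a short exact sequence of sheaves induces long exact sequences in both theories. By the preceding remarks, both functors agree in degree zero, each returning the global sections $\mathscr{F}(X) = \textnormal{Hom}(\mathcal{O}_{X}, \mathscr{F})$. The goal is therefore to promote this degree-zero agreement to a natural isomorphism in all degrees.

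The route I would take is through \emph{effaceability}. Since sheaf cohomology is defined as the right derived functor of the left-exact global sections functor, it is by construction the universal cohomological $\delta$-functor extending $\mathscr{F} \mapsto \mathscr{F}(X)$; concretely, this universality is encoded in the fact that $H^{n}(X, \mathscr{I}) = 0$ for every injective $\mathcal{O}_{X}$-module $\mathscr{I}$ and every $n \geq 1$. The crux of the proof is to establish the analogous vanishing $\check{H}^{n}(X, \mathscr{I}) = 0$ for $\mathscr{I}$ injective and $n \geq 1$. First I would construct a natural transformation $\check{H}^{n}(X, -) \to H^{n}(X, -)$, e.g.\ by mapping the (sheafified) Čech complex into an injective resolution via the standard comparison lemma. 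Once both sides are known to be effaceable $\delta$-functors agreeing in degree zero, this natural transformation is forced to be an isomorphism in every degree.

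The main obstacle is precisely the injective-vanishing statement for Čech cohomology. The standard argument shows that for an injective $\mathscr{I}$, the sheafified Čech complex is a contractible resolution of $\mathscr{I}$; one builds the contracting homotopy by invoking the lifting property of injectives against an elementary acyclic complex of skyscraper-type data. A secondary technical point is that the cleanest form of the isomorphism assumes $X$ is paracompact, or else that $\check{H}$ is understood as the colimit over refinements of open covers — a mild hypothesis that is satisfied in the complex-analytic and projective algebraic settings relevant to this paper. An alternative route I might fall back on is the Čech-to-derived-functor spectral sequence $\check{H}^{p}(X, \mathcal{H}^{q}(\mathscr{F})) \Rightarrow H^{p+q}(X, \mathscr{F})$, whose degeneration under the injective-vanishing condition delivers the same conclusion; the two approaches are morally equivalent, but the effaceability argument fits most naturally with the derived-functor machinery already in place.
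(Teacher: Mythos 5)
Your proposal is correct in outline, but it is worth noting that the paper does not actually prove this theorem at all: its ``proof'' is a one-line citation to Theorem III.4.5 of Hartshorne, so any genuine argument you supply is necessarily a different route. What you propose --- both theories are cohomological $\delta$-functors agreeing in degree zero, sheaf cohomology is universal, and the whole argument reduces to showing $\check{H}^{n}(X,\mathscr{I})=0$ for injective $\mathscr{I}$ and $n\geq 1$ --- is the standard Godement-style effaceability proof, and your identification of the injective-vanishing step as the crux is exactly right. Two remarks on the comparison. First, the result Hartshorne actually proves is for quasi-coherent sheaves on a noetherian separated scheme with respect to an affine cover, and his mechanism is slightly different: he shows the \emph{sheafified} \v{C}ech complex of any sheaf is a resolution, observes that for a flasque sheaf this resolution is by flasque (hence acyclic) sheaves, and deduces vanishing of higher \v{C}ech cohomology for flasque sheaves; injectives are flasque, so this effaces the \v{C}ech $\delta$-functor. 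Your ``contractible resolution via the lifting property of injectives'' is a legitimate variant but is the less standard phrasing; the flasque route is cleaner and you may want to adopt it. Second, your caveat about paracompactness (or taking the colimit over refinements) is not a throwaway technicality: without it, \v{C}ech cohomology need not even carry long exact sequences, so the $\delta$-functor framing would collapse. Since the paper states the theorem for arbitrary $\mathcal{O}_{X}$-modules on a complex manifold --- a generality that Hartshorne's cited theorem does not literally cover --- your paracompactness-based argument arguably fits the stated hypotheses better than the paper's own reference does.
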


\begin{proof}
I refer the reader to Theorem 4.5 in \cite{hartshorne_algebraic_1997}.
\end{proof}

\noindent We have defined sheaf cohomology for arbitrary $\mathcal{O}_{X}$-modules, however we can simply restrict attention to the coherent sheaves if we like.  The reason being, the coherent sheaves are a \emph{full} subcategory, meaning the morphisms are the same in the subcategory, as they are in the original category.  In fact, in applications to D-branes, we will usually regard the locally-free sheaves as the particularly nice objects within the category of coherent sheaves.  Thus we'll want to take an arbitrary coherent sheaf, and resolve it using a tower of locally-free sheaves.  First, we must introduce the Ext groups.

\begin{defn}
Let $\mathscr{E}$ be an $\mathcal{O}_{X}$-module.  The functor $\textnormal{Hom}(\mathscr{E}, -)$ is left-exact, so we may consider its right derived functor evaluated at an $\mathcal{O}_{X}$-module $\mathscr{F}$.  This allows for the following definition of the Ext groups,
\begin{equation}
\textnormal{Ext}^{n}(\mathscr{E}, \mathscr{F}) = \textbf{R}^{n} \textnormal{Hom}(\mathscr{E}, - ) (\mathscr{F}).
\end{equation}
\end{defn} 

\noindent There are a few simple examples where the Ext groups correspond to familiar quantities:

\begin{equation}
\textnormal{Ext}^{0}(\mathscr{E}, \mathscr{F}) = \textbf{R}^{0} \textnormal{Hom}(\mathscr{E}, -)(\mathscr{F}) = \textnormal{Hom}(\mathscr{E}, \mathscr{F}),
\end{equation}

\noindent and

\begin{equation}
\textnormal{Ext}^{n}(\mathcal{O}_{X}, \mathscr{F}) = \textbf{R}^{n} \textnormal{Hom}(\mathcal{O}_{X}, -)(\mathscr{F}) = H^{n}(X, \mathscr{F}).
\end{equation}
Thus, we see that Ext groups at least encode abelian groups of sheaf morphisms and sheaf cohomology groups.  In addition, we have the following useful result, known as Serre duality.

\begin{thm}
In the case where $X$ is a Calabi-Yau $m$-fold, for all $n=0, \ldots, m$
\begin{equation}
\textnormal{Ext}^{n}(\mathscr{E}, \mathscr{F}) \simeq \textnormal{Ext}^{m-n}(\mathscr{F}, \mathscr{E}).
\end{equation}
\end{thm}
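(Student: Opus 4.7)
The plan is to bootstrap the Ext-version of the statement from the cohomology-version of Serre duality and then exploit the Calabi-Yau hypothesis to eliminate the twist by the canonical bundle. Concretely, I would take as a starting point the classical Serre duality for a smooth projective $m$-fold $X$, namely
\begin{equation}
H^n(X, \mathscr{G}) \simeq H^{m-n}(X, \mathscr{G}^{\vee} \otimes \omega_X)^{*}
\end{equation}
for a locally free sheaf $\mathscr{G}$, where $\omega_X$ is the canonical sheaf. Using the $\check{C}$ech--Dolbeault identification from the earlier remark, this can be viewed concretely as the non-degenerate pairing given by wedge product of $(0,q)$-forms followed by integration against the holomorphic volume form.

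The first step would be to reduce the Ext statement to a cohomology statement in the special case that $\mathscr{E}$ is locally free. In that situation one has $\mathscr{H}\textnormal{om}(\mathscr{E},\mathscr{F}) \simeq \mathscr{E}^{\vee}\otimes\mathscr{F}$, and a standard spectral-sequence or direct resolution argument gives
\begin{equation}
\textnormal{Ext}^n(\mathscr{E},\mathscr{F}) \simeq H^n(X,\mathscr{E}^{\vee}\otimes\mathscr{F}).
\end{equation}
Now apply the classical Serre duality with $\mathscr{G}=\mathscr{E}^{\vee}\otimes\mathscr{F}$ and invoke the Calabi-Yau hypothesis $\omega_X \simeq \mathcal{O}_X$, trivialized by the holomorphic $(m,0)$-form $\Omega$, to rewrite the dual side as $H^{m-n}(X,\mathscr{F}^{\vee}\otimes\mathscr{E})^{*} \simeq \textnormal{Ext}^{m-n}(\mathscr{F},\mathscr{E})^{*}$. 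Finite-dimensionality of these groups (from compactness of $X$ and coherence) then gives the asserted isomorphism on the nose, since in the statement we are interpreting $\simeq$ as an isomorphism of abelian groups.

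The main obstacle is extending this from locally free $\mathscr{E}$ to arbitrary coherent $\mathscr{E}$, which is where the philosophy of resolutions introduced in the preceding subsection enters. The idea is to pick a locally free resolution $\mathscr{E}^{\bullet}\to\mathscr{E}$, which exists on a smooth projective variety, and compute both sides of the claimed duality from this resolution. For the left-hand side one gets a hypercohomology spectral sequence whose $E_2$-page involves $\textnormal{Ext}^q(\mathscr{E}^p,\mathscr{F})$; for the right-hand side one uses that Serre duality is compatible with shifts and distinguished triangles, so the duality isomorphism from the locally free case assembles into an isomorphism of the two spectral sequences (up to the dualization and degree reversal $n\leftrightarrow m-n$). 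The delicate point is checking naturality: one must verify that the pairing $\textnormal{Ext}^n(\mathscr{E},\mathscr{F})\times\textnormal{Ext}^{m-n}(\mathscr{F},\mathscr{E})\to H^m(X,\mathcal{O}_X)\simeq \mathbb{C}$, induced by Yoneda composition followed by the trace $\int_X(-)\wedge\Omega$, is non-degenerate. That non-degeneracy, rather than the formal manipulations, is the real content and the step I expect to require the most care; one typically imports it from Hartshorne's general duality package, or proves it by dévissage from the locally free case established above.
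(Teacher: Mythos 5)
The paper states this theorem without proof---it is quoted as a known result (the Calabi--Yau case of Serre duality)---so there is no argument of the paper's to measure yours against; judged on its own, your proposal is the standard derivation and is correct in outline. Two points deserve care. First, the step where you apply the classical duality $H^{n}(X,\mathscr{G})\simeq H^{m-n}(X,\mathscr{G}^{\vee}\otimes\omega_{X})^{*}$ with $\mathscr{G}=\mathscr{E}^{\vee}\otimes\mathscr{F}$ silently requires $\mathscr{F}$ to be locally free as well: that form of Serre duality fails for general coherent $\mathscr{G}$, and the simplification $(\mathscr{E}^{\vee}\otimes\mathscr{F})^{\vee}\simeq\mathscr{E}\otimes\mathscr{F}^{\vee}$ also uses reflexivity of $\mathscr{F}$. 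Your closing paragraph only discusses resolving $\mathscr{E}$, so you should either resolve $\mathscr{F}$ by locally free sheaves too (a second spectral sequence, or a double-complex argument), or invoke the general form $\textnormal{Ext}^{n}(\mathscr{E},\mathscr{F})\simeq\textnormal{Ext}^{m-n}(\mathscr{F},\mathscr{E}\otimes\omega_{X})^{*}$ for coherent sheaves on a smooth projective variety directly, which is what the ``duality package'' you cite actually supplies. Second, you are right to flag that the canonical statement produces the linear dual $\textnormal{Ext}^{m-n}(\mathscr{F},\mathscr{E})^{*}$ and that finite-dimensionality over $\mathbb{C}$ is what lets you drop the dual to match the paper's (non-canonical) formulation; that, together with the non-degeneracy of the Yoneda-composition-plus-trace pairing that you correctly identify as the real content, is exactly the right level of care for the way the theorem is stated here.
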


Consider two holomorphic vector bundles $E$ and $F$ on $X$, with corresponding locally-free sheaves $\mathscr{E}$ and $\mathscr{F}$.  The space of vector bundle morphisms $\textnormal{Hom}(E, F)$ is actually itself a vector bundle, with fiber defined by $\textnormal{Hom}(E, F)(x) = \textnormal{Hom}(E(x), F(x))$, for all $x \in X$.  Since $\textnormal{Hom}(E, F)$ is a holomorphic vector bundle, there exists a corresponding locally-free sheaf which we denote as $\mathscr{H}\textnormal{om}(\mathscr{E}, \mathscr{F})$.  It is very easy to get mixed up here with the notation, so we briefly summarize,

\begin{center}
$\textnormal{Hom}(E, F)$ \,\,\,\,= \,\,\,\, \textnormal{the holomorphic vector bundle of bundle morphisms}

\hskip-7ex $\mathscr{H}\textnormal{om}(\mathscr{E}, \mathscr{F})$ \,\,\,\, = \,\,\,\, locally-free sheaf associated to $\textnormal{Hom}(E, F)$

\hskip-2ex$\textnormal{Hom}(\mathscr{E}, \mathscr{F})$ \,\,\,\, = \,\,\,\, abelian group of sheaf morphisms from $\mathscr{E}$ to $\mathscr{F}$.  
\end{center}

\noindent Moreover, $\textnormal{Hom}(\mathscr{E}, \mathscr{F})$ is actually the abelian group of global sections of $\mathscr{H}\textnormal{om}(\mathscr{E}, \mathscr{F})$.  By the $\check{C}$ech-Dolbeault-Sheaf isomorphism,

\begin{equation}
H^{0,q}_{\bar{\partial}}(X, \textnormal{Hom}(E,F)) \cong \check{H}^{q}(X, \mathscr{H}\textnormal{om}(\mathscr{E}, \mathscr{F})) \cong H^{q}(X, \mathscr{H}\textnormal{om}(\mathscr{E}, \mathscr{F})).
\end{equation}
Since $\textbf{R}^{0}\textnormal{Hom}(\mathcal{O}_{X}, -)(\mathscr{F})  = \textnormal{Hom}(\mathcal{O}_{X}, \mathscr{F})$ gives the global sections of $\mathscr{F}$, and $\textnormal{Ext}^{q}(\mathcal{O}_{X}, \mathscr{F}) \cong H^{q}(X, \mathscr{F})$, we can conclude analogously that,

\begin{equation} \label{eqn:ExtHom}
\textnormal{Ext}^{q}(\mathscr{E}, \mathscr{F}) \cong H^{q}(X, \mathscr{H}\textnormal{om}(\mathscr{E}, \mathscr{F})) \cong H^{0,q}_{\bar{\partial}}(X, \textnormal{Hom}(E, F))
\end{equation}
With this, we have finally converted all complex geometry of D-branes into algebraic and sheaf-theoretic language.  For $X$ a Calabi-Yau threefold, we conclude from (\ref{eqn:ExtHom}) and (\ref{eqn:openstringst}),

\begin{center}
\textbf{Given two stacks of D6-branes in the B-model with associated locally-free sheaves $\mathscr{E}$ and $\mathscr{F}$, the open strings states stretching from $\mathscr{E}$ to $\mathscr{F}$ with ghost number $q$, are given by the abelian group $\textnormal{Ext}^{q}(\mathscr{E}, \mathscr{F})$.}
\end{center}

\section{The Derived Category and Complexes of D-branes}

For the time being, I would like to restrict attention to D6-branes modeled as locally-free sheaves, as opposed to more general coherent sheaves.  We saw above that given two stacks of D6-branes wrapping a Calabi-Yau threefold $X$ with corresponding holomorphic vector bundles $E$ and $F$, we can ask about the morphisms between them.  These were shown to be given by the Dolbeault cohomology $H^{0,q}_{\bar{\partial}}(X, \textnormal{Hom}(E,F))$ or equivalently, $\textnormal{Ext}^{q}(\mathscr{E}, \mathscr{F})$.  We identify each morphism with a string state, and the \emph{ghost number} or R-charge of the string corresponds to $q$.  Mathematically, we can think of this $q$ as providing a natural $\mathbb{Z}$-grading.  Given a D-brane with holomorphic vector bundle $E$ (or locally-free sheaf $\mathscr{E}$), we can consider all strings attached to it as being graded by an integer.  Thus, it seems natural to initially consider direct sums of locally-free sheaves on $X$,

\begin{equation}
\mathscr{E} = \bigoplus_{n \in \mathbb{Z}} \mathscr{E}^{n}.
\end{equation}
The above direct sum can trivially regarded as a complex with all maps being zero

\begin{equation}
\begin{CD}
\dotr{\mathscr{E}} =\big( \ldots @>0>> \mathscr{E}^{-1} @>0>> \mathscr{E}^{0} @>0>> \mathscr{E}^{1} @>0>> \ldots \big).
\end{CD}
\end{equation}
Simply put, we want to deform away from the trivial case of direct sums by turning on non-zero maps between the $\mathscr{E}^{i}$ in the above sequence.  These non-zero maps will be called \emph{tachyons} for reasons to be explained shortly.  Once we do this, the D-branes will correspond to elements in the category of complexes $\textbf{Kom}(\mathcal{C})$, where $\mathcal{C}$ is the category of locally-free sheaves on $X$.  However, physically, the string states correspond to elements in $Q$-cohomology, so we need to identify all states differing by a $Q$-exact terms.  Remarkably, this identification on the physics side, corresponds precisely to identifying complexes up to homotopy in $\textbf{Kom}(\mathcal{C})$.  This places the B-model D-branes in correspondence with the homotopy category $\bf{K}(\mathcal{C})$.

This correspondence is certainly elegant, but there is a fundamental problem here.  Most importantly, the homological algebra described just above requires that the category be \emph{abelian}.  The category of locally-free sheaves is additive, but not abelian.  The resolution here will be to extend our consideration to the category of coherent sheaves $\text{Coh}(X)$, which is an abelian category containing the category of locally-free sheaves $\mathcal{C}$.  This seemingly dangerous problem was actually hinting that we weren't considering all of the branes that we need to.  As we saw in an earlier section, the locally-free sheaves cannot describe D4, D2, nor D0-branes; these require torsion sheaves.  Thus, extending to coherent sheaves is well-motivated both mathematically and physically.

Given the homotopy category $\textbf{K}\text{Coh}(X)$ of coherent sheaves, it is tempting to identify quasi-isomorphisms and arrive at the (bounded) derived category $D^{b}\text{Coh}(X)$.  But is there any physical motivation for this?  Indeed, we will argue that two complexes which are quasi-isomorphic lie in the same universality class of the renormalization group flow.  In other words, one complex can be thought of as \emph{condensing} to another \cite{sen_tachyon_1998}.  This explains the use of the term tachyon: in string theory, a tachyon is a particle which signifies an instability.  This instability corresponds to the branes in a complex annihilating each other.

\subsection{Deformation of Complexes} 

Let us begin by considering a stack of D6-branes on a threefold $X$ given by a holomorphic vector bundle $E$ (with associated sheaf $\mathscr{E}$), which decomposes as the direct sum

\begin{equation}
\mathscr{E} = \bigoplus_{n \in \mathbb{Z}} \mathscr{E}^{n},
\end{equation}
where each $\mathscr{E}^{i}$ is a locally-free sheaf on $X$.  The open string states from $\mathscr{E}$ to itself, correspond to linear combinations of elements in $\textnormal{Ext}^{*}(\mathscr{E},\mathscr{E})$.  For all $n$, $k$ the string states with ghost number $q$ correspond to elements of $\textnormal{Ext}^{k}(\mathscr{E}^{n}, \mathscr{E}^{n-k + q})$.  For example, when $k=1$, the string with ghost number $q=1$ correspond to elements in $\textnormal{Ext}^{1}(\mathscr{E}^{n}, \mathscr{E}^{n})$, which describe deformations of the locally-free sheaf $\mathscr{E}$ associated to the vector bundle $E$.  The more pressing case to consider is $k=0$.  Here, the ghost number $q=1$ strings are elements of $\textnormal{Ext}^{0}(\mathscr{E}^{n}, \mathscr{E}^{n+1}) \simeq \textnormal{Hom}(\mathscr{E}^{n}, \mathscr{E}^{n+1})$.  Let us define $d=\sum d_{n} \in \textnormal{Hom}(\mathscr{E}, \mathscr{E})$, where

\begin{equation}
d_{n} \in \textnormal{Hom}(\mathscr{E}^{n}, \mathscr{E}^{n+1}).
\end{equation}
Thus, $d_{n}$ is a holomorphic map from $\mathscr{E}^{n}$ to $\mathscr{E}^{n+1}$.  We can use $d$ to deform the physical sigma model action

\begin{equation}
\delta S = \oint_{\partial \Sigma} (\psi_{+}^{i} + \psi_{-}^{i})\partial_{i}d,
\end{equation}
and then prove that deforming the action by $\delta S$, requires a deformation of the BRST operator as well

\begin{equation}
Q= Q_{0} + d.
\end{equation}
We need to retain the nilpotence $Q^{2} =0$ of the BRST operator, which leads to the constraint

\begin{equation}
\{Q_{0}, d\} + d^{2} =0.
\end{equation}
The two terms above must individually vanish.  The constraint $\{Q_{0}, d\}=0$ is merely the statement that $d$ is a holomorphic map, recalling that in the B-model, the undeformed BRST operator is $Q_{0} = \bar{\partial}$.  The condition $d^{2}=0$ can be expanded in terms of successive maps, and we see that

\begin{equation}
d_{n+1} d_{n} =0.
\end{equation}
Thus, the nilpotence of the deformed BRST operator $Q$ is translated into the conditions that each $d_{n}$ must be a holomorphic map, and the consecutive application of two successive maps must vanish.  That is to say $\mathscr{E}$ is deformed into the complex

\begin{equation}
\begin{CD}
\dotr{\mathscr{E}} = \big(\ldots @>d_{n-1}>> \mathscr{E}^{n} @>d_{n}>> \mathscr{E}^{n+1} @>d_{n+1}>> \mathscr{E}^{n+2} @>d_{n+2}>> \ldots\big)
\end{CD}
\end{equation}

Consider now the slightly more general case of open strings stretching from a stack of D6-branes $\dotr{\mathscr{E}}$ to another stack of D6-branes $\dotr{\mathscr{F}}$.  Let both $\dotr{\mathscr{E}}$ and $\dotr{\mathscr{F}}$ consist of a collection of objects (graded by ghost number) constituting a trivial complex; namely, $\dotr{\mathscr{E}}$ and $\dotr{\mathscr{F}}$ decompose into direct sums.  We deform the theory by turning on the differentials $d^{\mathscr{E}}$ and $d^{\mathscr{F}}$, yielding two  non-trivial complexes.  The deformed BRST operator can be shown to be

\begin{equation}
Q = Q_{0} + d^{\mathscr{E}} - d^{\mathscr{F}}.
\end{equation}
Let $f^{n}: \mathscr{E}^{n} \to \mathscr{F}^{n}$ be a collection of maps from the elements of the complex $\dotr{\mathscr{E}}$, to the complex $\dotr{\mathscr{F}}$.  These should be thought of intuitively as strings stretching from $\mathscr{E}^{n}$ to $\mathscr{F}^{n}$.  What are the conditions that the map of complexes $f$ is BRST invariant?  We require 

\begin{equation}
Qf^{n}= Q_{0}f^{n} + f^{n+1}d^{\mathscr{E}} - d^{\mathscr{F}} f^{n}=0.
\end{equation}
Like above, this factors into two independent constraints.  First, we require $Q_{0}f^{n}=0$ for all $n$.  That is to say $f^{n}$ is a holomorphic map, $f^{n} \in \textnormal{Hom}(\mathscr{E}^{n}, \mathscr{F}^{n})$.  The second condition is that $f^{n+1}d^{\mathscr{E}}=d^{\mathscr{F}} f^{n}$.  This is precisely what it means for $f$ to define a morphism of complexes.  Moreover, if two such maps $f$ and $f'$ differ by a $Q$-exact term,

\begin{equation}
f' = f + Qh,
\end{equation}
then we see that $f$ and $f'$ are homotopic morphisms of complexes.  Therefore, quotienting by homotopy equivalence is the mathematical manifestation of passing to $Q$-cohomology, which is well-motivated physically.  In order that a map $f: \dotr{\mathscr{E}} \to \dotr{\mathscr{F}}$ be a genuine morphism of complexes, we require that $f$ be BRST invariant ($Q$-closed) which is precisely the physical notion of corresponding to an allowed open string state.  Moreover, two such states $f$ and $f'$ are deemed physically equivalent if and only if they differ by a $Q$-exact term, and this coincides with the definition of $f$ and $f'$ being homotopic chain maps!  We conclude that the homotopy category $\textbf{K}(\mathcal{C})$ naturally models stacks of D6-branes in the B-model.

\subsection{Renormalization Group (RG) Flow and Quasi-Isomorphisms}

Given the homotopy category $\textbf{K}(\mathcal{C})$, it is clearly tempting to ask if there is any physical motivation to identify quasi-isomorphisms, landing us once and for all in the derived category.  I hope to outline the state-of-the-art conjecture that D-branes in the B-model related by a quasi-isomorphism correspond to physical configurations related by worldsheet renormalization group (RG) flow; in some loose sense, the branes and anti-branes at least partially annihilate.

\subsection*{Branes, Anti-Branes, and Tachyons}

First, we introduce some terminology.  Given a D-brane represented as a complex,

\begin{equation}
\begin{CD}
\dotr{\mathscr{E}} = \big(\ldots @>d_{n-1}>> \mathscr{E}^{n} @>d_{n}>> \mathscr{E}^{n+1} @>d_{n+1}>> \mathscr{E}^{n+2} @>d_{n+2}>> \ldots\big)
\end{CD}
\end{equation}
we consider the entries of the complex to be alternating branes and anti-branes, and we call the maps $d_{n}$ \textit{tachyons}.  In string theory, a tachyon indicates an instability in a physical system.  Indeed, here we mean that non-zero tachyons $d_{n}$ lead to an instability of the configuration of D-branes.  In the trivial case where all $d_{n}$ vanish, the system appears to be in a stable state, but with non-zero tachyons, the configuration may flow via the renormalization group to a more stable system.

In physics, two theories related by renormalization group flow are said to lie in the same \textit{universality class}.  The conjecture here is that the physical universality classes correspond to the equivalence classes of quasi-isomorphisms.  It's crucial to note that two theories in the same universality class, are not equivalent physical theories.  RG flow represents a flow in the ``space of theories" to a completely different physical theory.

We seem to have argued that the category of D-branes in the B-model topological string is the bounded derived category $D^{b}(\mathcal{C})$ of \emph{locally-free sheaves} on $X$.  However, as mentioned earlier, the locally-free sheaves are not an abelian category and this is not well-defined.  The natural guess is to pass to the coherent sheaves which are essentially the abelianization of the locally-free sheaves.  Indeed, we have seen that a D-brane na\"{i}vely corresponds to a coherent sheaf, making this extension reasonable.  Therefore, the conjectural conclusion provided by \cite{aspinwall_d-branes_2004,sharpe_lectures_2003} is,

\begin{center}
\textbf{The category of D-branes in the B-model topological string are the $\Pi$-stable\footnote{The $\Pi$-stability of M. Douglas was introduced in \cite{douglas_stability_2005}, see also \cite{aspinwall_d-brane_2002}.  This was the physical precursor to Bridgeland stability on a triangulated category \cite{bridgeland_stability_2007}.} objects in the derived category $\bm{D^{b}\text{Coh}(X)}$ of coherent sheaves on a Calabi-Yau threefold $\bm{X}$.  Quotienting by homotopy corresponds to identifying states up to $Q$-exact terms.  Quasi-isomorphism corresponds to worldsheet RG flow and brane/anti-brane annihilation.}    
\end{center}

\noindent This is fundamentally built on the original conjecture of Kontsevich \cite{kontsevich_homological_1994} relating homological algebra and mirror symmetry.

\section{Examples}

\subsection*{Example 1: Elementary Brane/Anti-Brane Annihilation}

The following example is as simple as it gets, but it illustrates well all of the features of the discussion above.  Consider the following complex,

\begin{equation}
\begin{CD}
\ldots @>>> 0 @>>> E @>c>> E @>>>0 @>>> \ldots
\end{CD}
\end{equation}
where $E$ can be any coherent sheaf supported on a subvariety $Z$ of $X$.  If the above map $c$ is identically zero, then we essentially have two copies of $E$ which do not couple in any sense, and the complex decomposes into a direct sum $E \oplus E$.  This can be thought of as two stacks of D-branes wrapped on $Z$ which do not interact.  If, however, we turn on the map $c \neq 0$, physically, we have added a VEV for a tachyon field, which indicates an unstable coupling between the branes.  The $E$ on the left represents an anti-brane while the $E$ on the right represents a brane.  Intuitively, we physically expect the branes to annihilate due to this instability.  In other words, this sequence should be in the same universality class as the zero complex.  Indeed, in this case this sequence is quasi-isomorphic to its cohomology, which is simply zero in every entry, for $c \neq 0$.  And so the intuition is verified: the unstable configuration is quasi-isomorphic to the zero complex, signifying brane/anti-brane annihilation.

\subsection*{Example 2: D4-Branes}

Let $Z \subseteq X$ be a codimension one complex subvariety of $X$, i.e. a divisor.  This means $Z$ is cut out by a section of a line bundle $\mathcal{O}(-Z)$, which is given locally by the vanishing of a holomorphic function $f$.  Since $\mathcal{O}(-Z)$ consists simply of all holomorphic functions vanishing identically on $Z$, it naturally injects into $\mathcal{O}_{X}$.  The cokernel of this map $\mathcal{O}(-Z) \to \mathcal{O}_{X}$ is simply the structure sheaf of $Z$, denoted $\mathcal{O}_{Z}$.  Thus, we have the following short exact sequence of sheaves,

\begin{equation}
\begin{CD}
0 @>>>\mathcal{O}(-Z) @>f>> \mathcal{O}_{X} @>>> \mathcal{O}_{Z} @>>> 0.
\end{CD}
\end{equation}
We can reinterpret this exact sequence.  Let us define the complex (\emph{not} exact sequence) $\dotr{\mathcal{E}}$,

\begin{equation}
\begin{CD}
\dotr{\mathscr{E}} = \big(\ldots @>>> 0 @>>>\mathcal{O}(-Z) @>f>> \mathcal{O}_{X} @>>> 0 @>>>\ldots \big),
\end{CD}
\end{equation}
where we take $\mathcal{O}_{X}$ to be in the zeroth degree slot.  It's straightforward to compute the cohomology of this complex, where we get zero in all degrees except $\mathscr{H}^{0}(\dotr{\mathscr{E}}) = \mathcal{O}_{X}/\mathcal{O}(-Z) = \mathcal{O}_{Z}$.  Since we are working in codimension one, $\mathcal{O}(-Z)$ is a locally-free sheaf, so we see that we have recovered the coherent sheaf $\mathcal{O}_{Z}$ as the cohomology of a complex of locally-free sheaves.  Thus, we can interpret the complex $\dotr{\mathscr{E}}$ as consisting of a brane and an anti-brane annihilating to yield simply $\mathcal{O}_{Z}$ as the endpoint of renormalization group flow.  It's natural to consider $\mathcal{O}_{Z}$ as associated to a D4-brane, since it is supported on a four real dimensional manifold $Z$.  

Interpreting this example in another light, we regard $\mathcal{O}(-Z) \to \mathcal{O}_{X}$ as a locally-free resolution of the torsion coherent sheaf $\mathcal{O}_{Z}$.  Of course, the cohomology of a resolution coincides with the original object, itself.  In this way, we can imagine resolving any coherent sheaf supported on a complex subvariety by locally-free sheaves.  This is what we meant above, when we mentioned that coherent sheaves arise naturally from complexes of locally-free sheaves, under RG flow.

\subsection*{Example 3: D0-Branes}

Let us take $X$ to be a Calabi-Yau threefold which we can study locally as $\mathbb{C}^{3}$ with coordinates $(x,y,z)$.  We define a map

\begin{equation}
\begin{CD}
\mathcal{O}_{X}^{\oplus 3} @> (x \, y\,  z)>> \mathcal{O}_{X},
\end{CD}
\end{equation}
defined by taking a triple of holomorphic functions $(f_{1}, f_{2}, f_{3})$ to the holomorphic function $xf_{1} + yf_{2} + zf_{3}$.  Since the cokernel of this map should be $\mathcal{O}_{X}$ modulo the image of this map, we expect that a section of the cokernel must vanish away from the origin in $\mathbb{C}^{3}$, but can take any complex value at the origin.  Letting $p$ denote the origin in $\mathbb{C}^{3}$, we see that the cokernel of the above map is simply the skyscraper sheaf $\mathcal{O}_{p}$ at $p$.  Moreover, we naturally have a surjective map $\mathcal{O}_{X} \to \mathcal{O}_{p}$ arising from evaluating a holomorphic function at $p$.  Thus, the skyscraper sheaf $\mathcal{O}_{p}$ is a coherent sheaf.  We have the exact sequence

\begin{equation}
\begin{CD}
\mathcal{O}_{X}^{\oplus 3} @> (x \, y\,  z)>> \mathcal{O}_{X} @>>> \mathcal{O}_{p} @>>> 0. 
\end{CD}
\end{equation}
Indeed, this map has a kernel, corresponding to the sheaf of all functions vanishing at $p$; this is the ideal sheaf of $p$.  Finally, this gives the short exact sequence

\begin{equation}
\begin{CD}
0 @>>> \mathcal{I}_{p} @>>> \mathcal{O}_{X} @>>> \mathcal{O}_{p} @>>> 0. 
\end{CD}
\end{equation}
Just like in the previous example, we can define the complex $\dotr{\mathscr{E}}$,

\begin{equation}
\begin{CD}
\dotr{\mathscr{E}} = \big(\ldots @>>> 0 @>>>\mathcal{I}_{p} @>>> \mathcal{O}_{X} @>>> 0 @>>>\ldots \big),
\end{CD}
\end{equation}
and trivially compute the cohomology of the complex to vanish in all degrees except $\mathscr{H}^{0}(\dotr{\mathscr{E}}) = \mathcal{O}_{X}/\mathcal{I}_{p} = \mathcal{O}_{p}$.  Once again, we recover a general coherent sheaf as the cohomology of a complex of coherent sheaves.  The only difference here is that $\mathcal{I}_{p}$ is no longer a locally-free sheaf.  Rather, it can be though of roughly as a trivial line bundle outside the origin, where there is no fiber.

\subsection*{Example 4: Branes Wrapping Curves and Points}

The setting most closely aligned with modern enumerative geometry and string theory consists of studying one-dimensional sheaves on a Calabi-Yau threefold $X$.  These are sheaves which have a complex one-dimensional support.  The Gromov-Witten, Donaldson-Thomas, and Gopakumar-Vafa invariants often package themselves into partition functions exhibiting remarkable properties, and uncovering surprising connections to subjects like modular forms, and representation theory, to name just a few.  As a final example, I would like to briefly outline the connection two of these invariants have with the content of this article.  

The \emph{Donaldson-Thomas invariants} are a (virtual) count of subschemes $Z \subseteq X$ supported on a fixed homology class $\beta \in H_{2}(X, \mathbb{Z})$. and whose structure sheaf $\mathcal{O}_{Z}$ has a fixed holomorphic Euler characteristic.  Such subschemes can be supported on both curves and points.  We therefore think of $\mathcal{O}_{Z}$ as a \emph{bound state of D2-D0 branes}.  However, there is necessarily a surjective map $\mathcal{O}_{X} \to \mathcal{O}_{Z}$, with kernel $\mathcal{I}_{Z}$.  Therefore, one often hears the Donaldson-Thomas invariants described as enumerating \emph{bound states of D2-D0 (anti) branes within a single D6-brane}.

The \emph{Gopakumar-Vafa invariants} are integers which count BPS states of D2-branes wrapped on curves in $X$.  Given a class $\beta \in H_{2}(X, \mathbb{Z})$ we can consider the moduli space $\mathcal{M}(0,0,\beta, 1)$ of pure one-dimensional sheaves $\mathscr{F}$ supported on class $\beta$ with D-brane charge $\mathcal{Q}(\mathscr{F}) = (0,0,\beta,1)$.  Recently, a proposal emerged \cite{maulik_gopakumar-vafa_2016} for extracting the Gopakumar-Vafa invariants from $\mathcal{M}(0,0,\beta, 1)$ consistent with their known relation to Gromov-Witten theory.

\vskip5ex
\emph{Due to the combined shortness of my talk, and immense breadth of this subject, I necessarily had to omit certain important topics and examples.  Particularly, some explicit computations of open string states as Ext groups.  These are covered excellently in \cite{sharpe_lectures_2003, aspinwall_d-branes_2004} to which I refer the reader.  In particular, \cite{sharpe_lectures_2003} contains quite a few very enlightening examples.  Another topic I had to omit is spectral sequences; a great discussion can be found in \cite{aspinwall_d-branes_2004}.}

\subsection*{Acknowledgements}

I am extremely grateful for the opportunity to attend the 2016 Superschool on Derived Categories and D-branes.  I'd like to thank the organizers Matthew Ballard, Charles Doran, and David Favero.  First and foremost, I am especially indebted to Eric Sharpe.  Without Eric's immense patience and lengthy responses to my questions, I would not have had the opportunity to understand the physical aspects of this subject.  I'd also like to thank all the attendees for an excellent week, particularly Jake Bian who provided valuable input and discussion.  Finally, I am thankful to my advisor Jim Bryan for reading a draft and offering extremely helpful suggestions.

\bibliographystyle{siam}
\bibliography{MyLibrary}
\end{document}